\renewcommand{\P}{\mathbb{P}}
\newcommand{\E}{\mathbb{E}}
\newtheorem{theorem}{Theorem}
\newtheorem{remark}{Remark}
\title{Information dissemination processes in directed social networks}
\author[1]{K. Avrachenkov\thanks{k.avrachenkov@sophia.inria.fr}}
\author[2]{K. De Turck\thanks{kdeturck@telin.ugent.be}}
\author[2]{D. Fiems\thanks{Dieter.Fiems@telin.UGent.be}}
\author[3,4]{B.J. Prabhu\thanks{balakrishna.prabhu@laas.fr}}
\affil[1]{INRIA Sophia Antipolis, 2004 route des Lucioles,  06902 Sophia Antipolis Cedex, France} 
\affil[2]{Department of Telecommunications and Information Processing, Ghent University, St. Pietersnieuwstraat 41, 9000 Gent, Belgium}
\affil[3]{CNRS, LAAS, 7 avenue du colonel Roche, F-31400 Toulouse, France}
\affil[4]{Univ de Toulouse, LAAS, F-31400 Toulouse, France}
\begin{document}
\maketitle

\begin{abstract}
Social networks can have asymmetric relationships. In the online social network
Twitter, a follower receives tweets from a followed person but the followed person is not obliged to
subscribe to the channel of the follower. Thus, it is natural to consider the dissemination of information
in directed networks. In this work we use the mean-field approach to derive differential equations that
describe the dissemination of information in a social network with asymmetric relationships. In particular,
our model reflects the impact of the degree distribution on the information propagation process. We further
show that for an important subclass of our model, the  differential equations can be solved analytically.
\end{abstract}

\section{Introduction}

We develop mathematical models for the dissemination of information on directed graphs and investigate the
influence of parameters such as the degree distribution on the dynamics of the dissemination process.
The directed graph model, as opposed to the undirected model, is better suited for networks like the
one of Twitter because of the asymmetric relationship that exists between different users. Specifically,
in the Twitter network, a user can choose to receive the tweets---in other words, become a follower---of one or more 
other users by subscribing to their accounts. Certain users, celebrities for example, have several millions of 
followers who follow their tweets. These users do not necessarily follow the tweets of all of their followers 
which results in an asymmetric relationship between users. This asymmetry is modelled by a directed graph in 
which an outgoing edge is drawn from a user to each of its followers. An edge in the opposite direction from 
the follower to the user need not always exist and is drawn only if the user subscribes to the channel of this follower.

A hashtag is a word or a phrase prefixed by \# and used in social networks as a keyword. The prefix facilitates 
the search for conversations related to the prefixed word or phrase. The typical life cycle of a hashtag  closely 
resembles an epidemic. In the first phase the interest in the hashtag grows as users
generate tweets containing this hashtag. These tweets are received by followers who then either retweet them
or generate new tweets with this hashtag. The number of users tweeting this hashtag (``infected'' users)
grows as a function of time during this phase. At a certain point in time, the interest reaches its zenith
and starts to wane as users move on and get interested in other events. The second phase begins at this point
in time as users stop tweeting this hashtag (or, ``recover''), and the number of infected users decreases.

In this work we use the mean-field approach to derive the differential equations which describe the process
of information dissemination. We obtain a couple of differential equations which describe the evolution
of the fractions of infected and recovered persons. As a model for the underlying network we take
the Configuration-type model for directed graphs \cite{CO-C13}. While epidemics have been widely studied on 
undirected graphs, there is a hardly any analysis of the epidemic-type processes on directed networks.
In \cite{PS01,M02}, the mean-field approach has been applied to the analysis of epidemics on an undirected
configuration-type graph model. In \cite{GMT05}, the effect of network topology has been analysed in
the case of undirected graphs. In particular, the authors of \cite{GMT05} applied their general results
to analyse the Erd\"{o}s-R\'enyi and preferential attachment random graph models.
An interesting approach combining a decomposition approach with two-state primitive Markov chain has been
proposed in \cite{vMOK09} for undirected networks with general topology. For an overview of various results
about epidemic processes on undirected networks we refer the interested reader to the books \cite{D07,BBV08,DM10}.

\section{The mean-field model}
\label{sec:meanfield}
Consider a network of $N$ nodes structured according to the Configuration-type model for directed graphs \cite{CO-C13}. 
The in-degree and out-degree of the nodes are drawn from a distribution $f(k,l) = \P(K = k, L = l)$, defined on the 
bounded set $\mathcal{D} = \{(k,l) : 0 \leq k \leq \hat{K}, 0\leq l \leq \hat{L}, (k,l) \neq (0,0)\}$, where the first 
(resp. second) index corresponds to the in-degree (resp. out-degree) and $\hat{K}$ (resp. $\hat{L}$) is the maximal
in-degree (resp. out-degree). In the remainder, we always assume that $\E K = \E L$; in a network every outgoing link 
is an incoming link of some other node. A generic node with in-degree $k$ and out-degree $l$ shall be referred to as a $(k,l)$-node.


Each node in the network can be in one of the three states : infected, recovered, or susceptible. An infected node infects its susceptible
neighbours after an exponentially distributed time with intensity $\lambda$. Note that, the infection is spread simultaneously along
all the outgoing edges and not just one edge at a time. The simultaneous dissemination along all outgoing edges models the spread of
tweets on Twitter. An infected node recovers after an exponentially distributed time of rate $\nu$, at which time it stops spreading
information in the network.

We shall be mainly interested in a
large-population model, that is when $N\to\infty$. This assumption simplifies considerably the analysis of the dissemination process 
while being realistic\footnote{Twitter has approximately $200$ million registered users
(source: Wikipedia).}.

Let $i_{k,l}(t)$ (resp. $r_{k,l}(t)$) denote the fraction of infected (resp. recovered) $(k,l)$ nodes
at time $t$. The following result describes the dynamics of these two quantities.

\begin{theorem}
		Let $i_{k,l}(0) > 0$ for some $(k,l)\in\mathcal{D}$. Then, $\forall (k,l) \in \mathcal{D}$,
\begin{equation}
	\frac{di_{k,l}(t)}{dt} = \lambda k (f(k,l)-i_{k,l}(t) - r_{k,l}(t))\frac{\sum_{k',l'}l'i_{k',l'}(t)}{\sum_{k',l'}l'f(k',l')} 
				- i_{k,l}(t)\nu,
\label{eqn:alt_master}
\end{equation}
and
\begin{equation}
		\frac{dr_{k,l}(t)}{dt} = i_{k,l}(t)\nu.
\label{eqn:alt_master_r}
\end{equation}
\end{theorem}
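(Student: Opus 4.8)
The plan is to obtain \eqref{eqn:alt_master}--\eqref{eqn:alt_master_r} as the large-$N$ fluid limit of the natural continuous-time Markov chain on the configuration-type graph, with a mean-field (propagation-of-chaos) argument used to close the equations. First I would set up the exact stochastic model for finite $N$: the state is the vector of node states in $\{S,I,R\}^N$, and since the configuration-type model is exchangeable within each degree class while both mechanisms depend on a node only through its degree pair and its state, it suffices to track the counting variables $X_{k,l}(t)$ and $Y_{k,l}(t)$, the numbers of infected and of recovered $(k,l)$-nodes; the number of susceptible $(k,l)$-nodes is then $S_{k,l}(t)=Nf(k,l)-X_{k,l}(t)-Y_{k,l}(t)$. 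Recovery is immediate: each infected $(k,l)$-node carries an independent exponential clock of rate $\nu$, so $(X_{k,l},Y_{k,l})\to(X_{k,l}-1,Y_{k,l}+1)$ at rate $\nu X_{k,l}(t)$, which already yields \eqref{eqn:alt_master_r} after dividing by $N$ and passing to the limit.

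The core step is the force of infection on a susceptible $(k,l)$-node. Such a node has $k$ incoming edges, and in the configuration model each incoming edge is matched to an out-stub chosen uniformly among all out-stubs; the fraction of out-stubs emanating from infected nodes is
\[ \Theta(t) \;=\; \frac{\sum_{k',l'} l'\,X_{k',l'}(t)}{\sum_{k',l'} l'\,Nf(k',l')} \;=\; \frac{\sum_{k',l'} l'\,x_{k',l'}(t)}{\sum_{k',l'} l'\,f(k',l')}, \qquad x_{k,l}:=X_{k,l}/N, \]
where self-loops, multi-edges, and the $O(1/N)$ effect of excluding the node's own stubs are negligible as $N\to\infty$. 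A given incoming edge transmits in $[t,t+dt]$ with probability $\lambda\Theta(t)\,dt+o(dt)$ (an infected source fires its rate-$\lambda$ clock), so the probability that at least one of the $k$ incoming edges transmits is $\lambda k\,\Theta(t)\,dt+o(dt)$; to this first order the simultaneous-versus-sequential nature of the spreading along outgoing edges is irrelevant. Hence $(X_{k,l},Y_{k,l})\to(X_{k,l}+1,Y_{k,l})$ at rate $\lambda k\,\Theta(t)\,S_{k,l}(t)$. Applying the generator (Dynkin's formula) to $X_{k,l}$ and $Y_{k,l}$ and dividing by $N$ gives
\[ \frac{d}{dt}\,\E[x_{k,l}(t)] \;=\; \lambda k\,\E\!\big[(f(k,l)-x_{k,l}(t)-y_{k,l}(t))\,\Theta(t)\big]\;-\;\nu\,\E[x_{k,l}(t)], \]
together with the trivial $\tfrac{d}{dt}\E[y_{k,l}(t)]=\nu\,\E[x_{k,l}(t)]$.

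It remains to close this system as $N\to\infty$. The family $(x_{k,l},y_{k,l})_{(k,l)\in\mathcal D}$ is a density-dependent Markov population process, so by a Kurtz-type law of large numbers (equivalently, propagation of chaos on the locally tree-like configuration graph) it concentrates on a deterministic trajectory, and the expectation of the product $(f-x_{k,l}-y_{k,l})\Theta$ factorises into the product of the limiting expectations. Setting $i_{k,l}(t):=\lim_N\E[x_{k,l}(t)]$ and $r_{k,l}(t):=\lim_N\E[y_{k,l}(t)]$ then turns the two displayed relations into exactly \eqref{eqn:alt_master}--\eqref{eqn:alt_master_r}; the hypothesis $i_{k,l}(0)>0$ for some $(k,l)$ merely ensures $\Theta(0)>0$ so that the deterministic dynamics are not frozen at the disease-free state. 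The step I expect to be the main obstacle is precisely this closure: rigorously justifying that cross-correlations between degree classes and between the ``infected out-stub'' indicators and the local susceptible mass vanish in the limit, and that the quenched configuration model behaves like its annealed counterpart. At the heterogeneous-mean-field level used here this is standard but not entirely innocent, since a finer edge-based bookkeeping (à la Volz/Miller) would replace $\Theta(t)$ by the fraction of incoming edges whose source is infected and has \emph{not yet} transmitted along that edge, leading to a different, more refined system.
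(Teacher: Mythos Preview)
Your proposal is correct and follows essentially the same heuristic mean-field argument as the paper: identify the fraction of out-stubs emanating from infected nodes, $\Theta(t)=\sum_{k',l'}l'i_{k',l'}(t)/\sum_{k',l'}l'f(k',l')$, conclude that a susceptible $(k,l)$-node is hit at rate $\lambda k\Theta(t)$, and combine this with the trivial recovery rate. The only difference is presentational: the paper carries out the computation via an elementary small-$\Delta$ balance (writing $p_{k,l}=1-(1-q_{k,l})^k\approx kq_{k,l}$ and then letting $\Delta\to0$), whereas you phrase the same steps in the language of density-dependent Markov chains, Dynkin's formula, and a Kurtz-type fluid limit; your framing is more formal and you flag the closure/annealed-versus-quenched issue explicitly, but the paper's own proof is itself only a sketch and does not attempt to justify that step either.
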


\begin{proof}[Sketch of proof]
		Let $I^{N}_{k,l}(t)$ (resp. $R^{N}_{k,l}(t)$) be the number of infected (resp. recovered) $(k,l)$ nodes in
		a network of $N$ nodes. Then in a small time interval $\Delta$,
\begin{align*}
		I^{(N)}_{k,l}(t+\Delta) &= I^{(N)}_{k,l}(t) + \mbox{number of $(k,l)$ nodes infected in time $\Delta$} \\
					&- \mbox{number of $(k,l)$ infected $(k,l)$ nodes that recover in time $\Delta$}.
\end{align*}
Since each infected node recovers after an exponentially distributed time of rate $\nu$, the number of $(k,l)$
infected nodes that recover in $\Delta$ will be approximately $I^{(N)}_{k,l}(t)\nu\Delta$. There will be additional
terms containing $\Delta^2$ which we neglect.

Let us compute the number of $(k,l)$ nodes that get infected in time $\Delta$. There are
$N^{(N)}_{k,l} - (I^{(N)}_{k,l}(t) + R^{(N)}_{k,l}(t))$ susceptible $(k,l)$ nodes. Assume that each $(k,l)$ node has a
probability $p_{k,l}$ to get infected in the interval $\Delta$. Then, expected number of infected $(k,l)$ nodes in time
$\Delta$ will be $(N^{(N)}_{k,l}-(I^{(N)}_{k,l} + R^{(N)}_{k,l}))p_{k,l}$.

Each $(k,l)$ node has $k$ incoming edges. Assuming that the edges are connected independently,
$p_{k,l} = (1 - (1-q_{k,l})^k)$, where $q_{k,l}$ is the probability that the infection is transmitted along one of the edges in
$\Delta$.
The number of $(k,l)$ nodes infected in $\Delta$ is thus
\[
		(N^{(N)}_{k,l}-(I^{(N)}_{k,l}+R^{(N)}_{k,;}))\cdot(1-(1-q_{k,l})^k),
\]
which, for $\Delta$ sufficiently small, can be approximated as
\[
		(N^{(N)}_{k,l}-(I^{(N)}_{k,l} + R^{(N)}_{k,l}))kq_{k,l}.
\]
Finally, we compute $q_{k,l}$. In an interval $\Delta$, each infected node transmits the infection with probability
$\lambda \Delta$. Thus, there are $\sum_{k','l}l'I^{(N)}_{k',l'}\lambda\Delta$ edges that are infected and that transmit the 
infection in $\Delta$. There are a total of $\sum_{k',l'}l'N^{(N)}_{k',l'}$. Assuming that an incoming edge is connected 
uniformly at random to an outgoing edge, we obtain
\[
		q_{k,l} = \frac{\sum_{k',l'}l'I^{(N)}_{k',l'}\lambda\Delta}{\sum_{k',l'}l'N^{(N)}_{k',l'}}.
\]

Consequently,
\[
		I^{(N)}_{k,l}(t+\Delta) - I^{(N)}_{k,l}(t) = (N^{(N)}_{k,l}-(I^{(N)}_{k,l} 
					+ R^{(N)}_{k,l}))k\frac{\sum_{k',l'}l'I^{(N)}_{k',l'}\lambda\Delta}{\sum_{k',l'}l'N^{(N)}_{k',l'}} 
					- I^{(N)}_{k,l}\nu\Delta.
\]
If the initial number of nodes is large, then we can divide the two sides of the above equation to obtain the following difference
equation in terms of the fraction of the infected and the recovered nodes:
\[
		i_{k,l}(t+\Delta) - i_{k,l}(t) = (f(k,l)-(i_{k,l}(t) 
				+ r_{k,l}(t)))k\frac{\sum_{k',l'}l'i_{k',l'}(t)\lambda\Delta}{\sum_{k',l'}l'f(k',l')}
				- i_{k,l}(t)\nu\Delta.
\]

To complete the picture, we take the limit $\Delta \to 0$, and obtain the differential equations \eqref{eqn:alt_master}
and \eqref{eqn:alt_master_r}.
\qed
\end{proof}

\begin{remark}
In the above equation $i_{k,l}$ is the fraction of the $(k,l)$ nodes that are infected. This fraction varies between $0$ and $f(k,l)$. If
instead, we want to look at the evolution of the fraction of infected nodes and recovered nodes conditioned on them being
$(k,l)$ nodes, then the corresponding differential equations for these fractions will be
\begin{align}
		\frac{di_{k,l}(t)}{dt} &= \lambda k (1-i_{k,l}(t) 
				- r_{k,l}(t))\frac{\sum_{k',l'}l'f(k'.l')i_{k',l'}(t)}{\sum_{k',l'}l'f(k',l')} 
				- i_{k,l}(t)\nu, \label{eqn:alt_master2} \\
		\frac{dr_{k,l}(t)}{dt} &= i_{k,l}(t)\nu.
\label{eqn:alt_master2_r}
\end{align}
\end{remark}

\section{Epidemics without recovery}
The solution of \eqref{eqn:alt_master} and \eqref{eqn:alt_master_r} can be computed numerically. In some specific case 
we can obtain explicit solutions to these equations. In particular, this is the case when there is no recovery: $\nu = 0$,  
or in the language of Twitter, they keep generating new tweets with the same hashtag. That is, a hashtag never gets out 
of mode. This can well represent the case for the topics or personalities that can sustain popularity over a long period of time.

Since there are no recovered nodes, $r_{k,l}(t) = 0, \, \forall t$, and \eqref{eqn:alt_master} takes the form
\begin{align}
	\frac{di_{k,l}(t)}{dt} &= \lambda k (f(k,l)-i_{k,l}(t))\frac{\sum_{k',l'}l'i_{k',l'}(t)}{\sum_{k',l'}l'f(k',l')}
	\label{eqn:master_nu0}.
\end{align}
The differential equation \eqref{eqn:master_nu0} can be solved in terms of a reference value of $(k,l)$, say 
$(k,l) = (1,1)$ by noting that
\[
		\frac{1}{k (f(k,l)-i_{k,l}(t))}\frac{di_{k,l}(t)}{dt} = \frac{1}{(f(1,1)-i_{1,1}(t))}\frac{di_{1,1}(t)}{dt},
\]
whence
\begin{align}
		f(k,l)-i_{k,l}(t) &= \frac{f(k,l)-i_{k,l}(0)}{(f(1,1) - i_{1,1}(0))^k}(f(1,1)-i_{1,1}(t))^k =: c_{k,l}(f(1,1)-i_{1,1}(t))^k.
\label{eqn:par_soln_alt_mas_nu0}
\end{align}

The fraction of infected nodes of degree $(1,1)$ can be obtained by substuting the value of $i_{k,l}(t)$ in
\eqref{eqn:master_nu0} and solving it:
\begin{equation}
		\frac{di_{1,1}(t)}{dt} = \lambda (f(1,1)-i_{1,1}(t)) \frac{\sum_{k',l'}l'
		\left(f(k',l') - c_{k', l'}(f(1,1)-i_{1,1}(t))^{k'}\right)}{\sum_{k',l'}l'f(k',l')}.
\label{eqn:soln_alt_master}
\end{equation}

\subsection*{Deterministic in-degree}
Assume that the in-degree $K$ is deterministic and is equal to $d$.
Then, equation \eqref{eqn:master_nu0} becomes
\[
		\frac{di_{d,l}(t)}{dt} = \lambda d (f(d,l)-i_{d,l}(t))\sum_{l'} \frac{l'i_{d,l'}(t)}{\sum_j jf(d,j)}.
\]
Since the expected in-degree and the expected out-degree coincide, $\sum_j jf(d,j) = d.$
Denote $\Theta(t) = \sum_j \frac{ji_{d,j}(t)}{d}$, and rewrite the above equation as:
 \begin{equation}
		 \frac{di_{d,l}}{dt} = \lambda d (f(d,l)-i_{d,l}(t))\Theta(t).
\label{eqn:mast_fix_in_theta}
\end{equation}
 Multiplying the above equation by $\frac{l}{d}$ and summing over all values $l$, we obtain the following equation for $\Theta$:
\[
\frac{d\Theta}{dt} = \lambda d(1-\Theta)\Theta,
\]
which upon integration yields:
\begin{equation}
		i_{d,l}(t) =  f(d,l) - c_1 e^{-\lambda d \int \Theta(t) dt}
		= f(d,l) - \frac{f(d,l)-i_{d,l}(0)}{1-\Theta(0)+\Theta(0) e^{-\lambda d \cdot t}}.
\label{eqn:mast_soln_fixin}
\end{equation}

\section{Numerical experiments}
In this section, we validate the mean-field model developed in Section \ref{sec:meanfield}.
In the numerical experiments, first the in-degree and out-degree sequences are generated according to the given
degree distributions. So as to have the same number of incoming stubs as outgoing stubs, the difference between the two is
added to the smaller quantity. A configuration-type graph is then created by matching an incoming stub with an outgoing
stub chosen uniformly at random. It was shown in \cite{CO-C13} that this procedure does indeed approximates closely the 
configuration model. The information diffusion process is then simulated on this graph.

For computing the solution of the system of differential equations \eqref{eqn:alt_master} and \eqref{eqn:alt_master_r} 
numerically, the empirical degree distributions from the graph generated previously are given as input.

The results of two such experiments with $20000$ nodes is shown in Figure \ref{fig:experiments}. The in-degree and the 
out-degree sequences were taken to be independent of each other. The out-degree sequence was drawn from a Uniform distribution 
in the set $\{1,20\}$ in the two simulations. For the figure on
the left, the in-degree distribution was taken to be deterministic with parameter $10$, and for the figure on the right it was
the Zipf law on $\{1,71\}$ and exponent $1.2$. In both experiments, $\lambda = 1$ and $\nu = 0.5$, and $5$ percent of all nodes
were assumed to be infected at time $0$.
\begin{figure}[ht]
		\begin{minipage}{0.45\linewidth}
				\centering\includegraphics[width=1.1\linewidth]{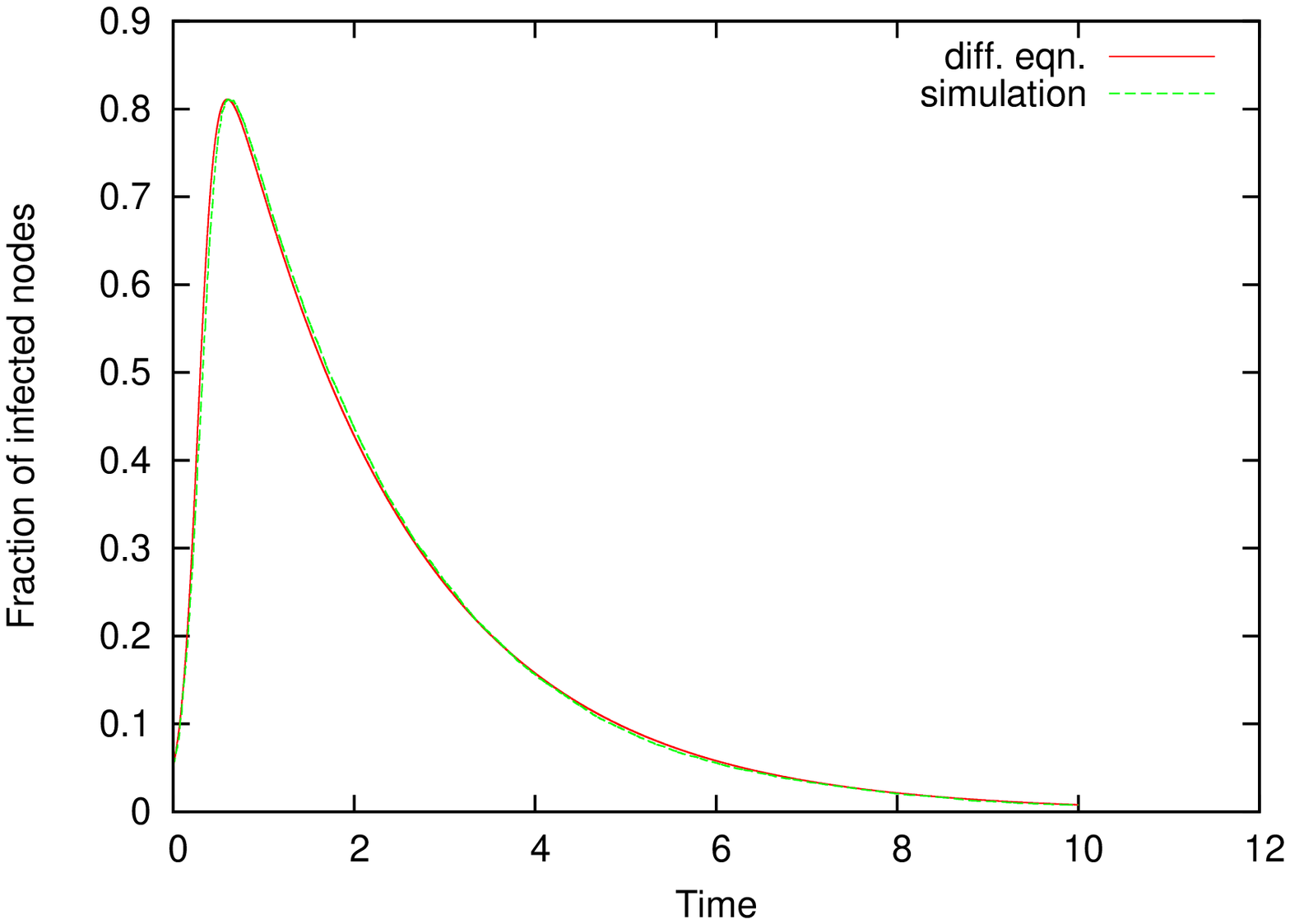}
		\end{minipage}
		\hspace{1cm}
		\begin{minipage}{0.45\linewidth}
				\centering\includegraphics[width=1.1\linewidth]{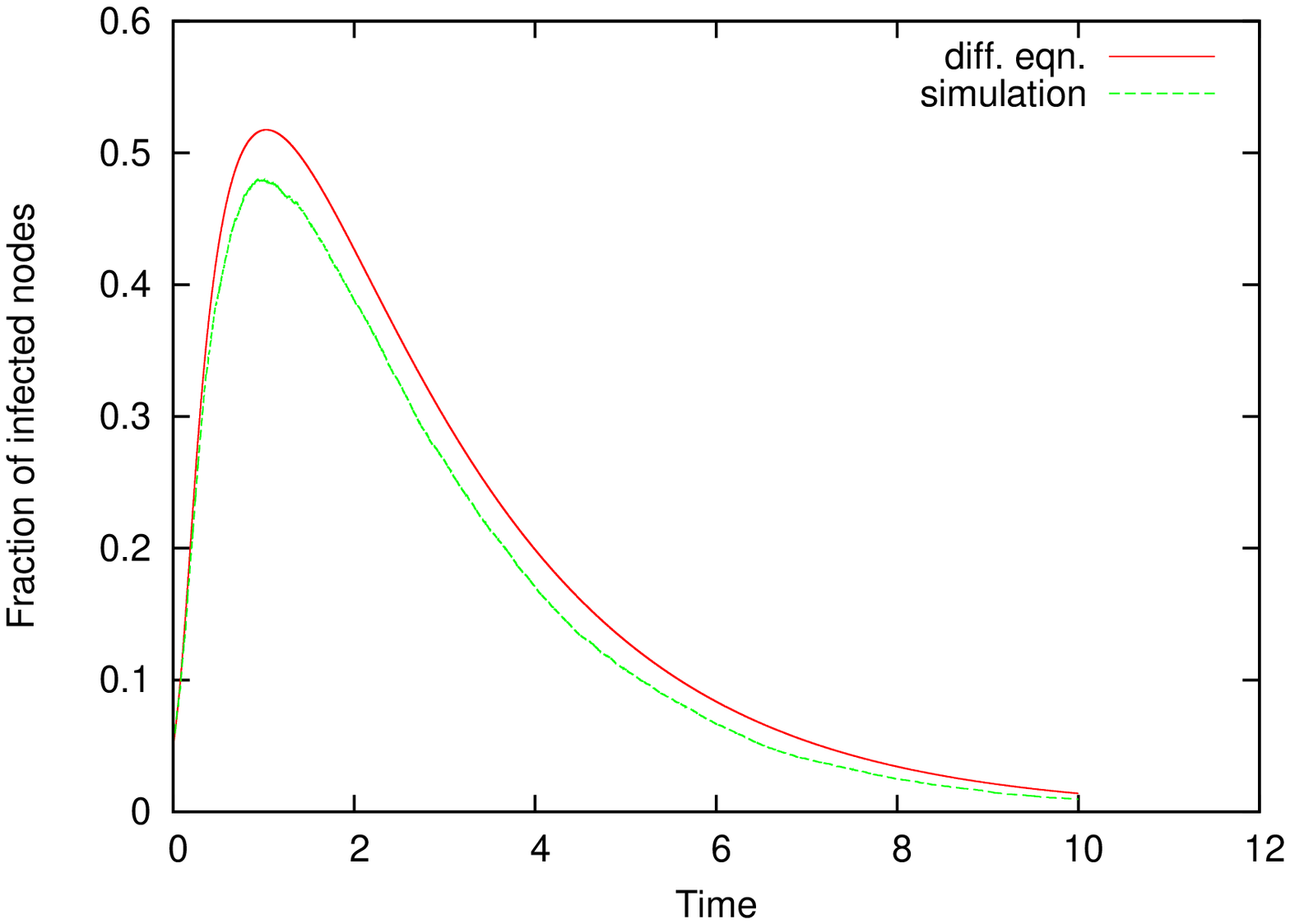}
		\end{minipage}
		\caption{Fraction of all nodes infected as a function of time for Deterministic in-degree distribution (left) and Zipf in-degree
				distribution (right).}
		\label{fig:experiments}
\end{figure}
\subsection*{Observations}
It is observed that the dissemination process is faster when the variance of the in-degree distribution is smaller. This
observation was reinforced by other experiments in which the in-degree was drawn from a Uniform distribution. In several
other experiments that we conducted, it was also observed that the out-degree distribution does not have 
any noticeable effect of the dynamics of the epidemics.

Our on-going work is oriented towards investigating the influence of the variance of the in-degree distribution
and giving a theoretical foundation to the above observations.
\section{Acknowledgments}
This work was partially supported by the Parternariat Hubert Curien PHC
TOURNESOL FR 2013 29053SF between France and the Flemish community
of Belgium, by the Inria Alcatel-Lucent Joint Lab ARC ``Network Science'', and by the European Commission within
the framework of the CONGAS project FP7-ICT-2011-8-317672.
\bibliographystyle{plain}
\bibliography{rr}

\begin{thebibliography}{1}

\bibitem{BBV08}
Alain Barrat, Marc Barthelemy, and Alessandro Vespignani.
\newblock {\em Dynamical Processes on Complex Networks}.
\newblock Cambridge University Press, 2008.

\bibitem{CO-C13}
N.~Chen and M.~Olvera-Cravioto.
\newblock Directed random graphs with given degree distributions.
\newblock {\em To appear in Stochastic Systems}.

\bibitem{DM10}
Moez Draief and Laurent Massouli.
\newblock {\em Epidemics and Rumours in Complex Networks}.
\newblock Cambridge University Press, 2010.

\bibitem{D07}
Richard Durrett.
\newblock {\em Random Graph Dynamics}, volume~20.
\newblock Cambridge university press, 2007.

\bibitem{GMT05}
Ayalvadi Ganesh, Laurent Massouli{\'e}, and Don Towsley.
\newblock The effect of network topology on the spread of epidemics.
\newblock In {\em INFOCOM 2005. 24th Annual Joint Conference of the IEEE
  Computer and Communications Societies. Proceedings IEEE}, volume~2, pages
  1455--1466. IEEE, 2005.

\bibitem{M02}
Yamir Moreno, Romualdo Pastor-Satorras, and Alessandro Vespignani.
\newblock Epidemic outbreaks in complex heterogeneous networks.
\newblock {\em The European Physical Journal B-Condensed Matter and Complex
  Systems}, 26(4):521--529, 2002.

\bibitem{PS01}
Romualdo Pastor-Satorras and Alessandro Vespignani.
\newblock Epidemic spreading in scale-free networks.
\newblock {\em Physical review letters}, 86(14):3200, 2001.

\bibitem{vMOK09}
P.~Van~Mieghem, J.~Omic, and R.~Kooij.
\newblock Virus spread in networks.
\newblock {\em Networking, IEEE/ACM Transactions on}, 17(1):1--14, 2009.

\end{thebibliography}
\end{document}